\newtheorem{proposition}{Proposition}
\newtheorem{remark}{Remark}
\newtheorem{definition}{Definition}
\newtheorem{asmpt}{Assumption}
\title{\bf Safety-Aware Multi-Agent Learning for Dynamic Network Bridging}
\author{Raffaele Galliera, Konstantinos Mitsopoulos, Niranjan Suri, and Raffaele Romagnoli
\thanks{R. Romagnoli is with the School of Science and Engineering, Department of Mathematics and Computer Science, Duquesne University, 600 Forbes Ave, Pittsburgh, PA 15213, USA:
        {\tt\small romagnolir@duq.edu}}%
\thanks{R. Galliera, N. Suri, and K. Mitsopoulos are with the Institute for Human and Machine Cognition, 40 South Alcaniz St, Pensacola, FL 32502, USA:
       {\tt\small  kmitsopoulos@ihmc.org }%
       {\tt\small  rgalliera@ihmc.org }
       {\tt\small  nsuri@ihmc.org }%
       }%
}
\begin{document}

\maketitle
\begin{abstract}

Addressing complex cooperative tasks in safety-critical environments poses significant challenges for multi-agent systems, especially under conditions of partial observability. We focus on a dynamic network bridging task, where agents must learn to maintain a communication path between two moving targets. To ensure safety during training and deployment, we integrate a control-theoretic safety filter that enforces collision avoidance through local setpoint updates. We develop and evaluate multi-agent reinforcement learning safety-informed message passing, showing that encoding safety filter activations as edge-level features improves coordination. The results suggest that local safety enforcement and decentralized learning can be effectively combined in distributed multi-agent tasks.
\end{abstract}

\section{Introduction}
Despite the promising performance of \gls{marl} in simulations, its application in real-world, safety-critical scenarios raises significant concerns. A key challenge is collision avoidance, especially in applications like autonomous driving, where failure to prevent collisions could result in severe consequences. Moreover, the inherent nature of \gls{mas} often involves partial observability, further increasing the complexity of these problems by limiting the information available to each agent about the environment and the states of other agents. 

This challenge highlights a fundamental limitation of all \gls{rl}-based approaches: while agents learn to optimize predefined reward functions, their behavior lacks guarantees and remains inherently unpredictable. The sole reliance on \gls{marl}'s reward functions to ensure safety is insufficient \cite{elsayed2021safe}. Even with extensive training and monitoring of an agent's performance based on expected rewards, or other performance measures, it is impossible to exhaustively examine all potential scenarios in which the agent might fail to act safely or predictably.


In ensuring safety among agents, various control-theoretic approaches and hybrid learning approaches have been developed, ranging from Model Predictive Control (MPC) \cite{dai2017distributed} to Reference Governor (RG) techniques \cite{li2021safe}, and from Control Barrier Functions (CBFs) \cite{gao2023online} to Reachability Analysis (RA) methods \cite{kochdumper2023provably}. 
In robotics applications, those methods are known as safety filters \cite{hsu2023safety}, and they are deployed to ensure safety regardless of the algorithm used for the agent to accomplish a task. While safety filters can effectively prevent unsafe behavior, they are often conservative by design. This conservatism may lead to conflicts with task objectives, especially in \gls{mas} where coordination is essential.

In this work, we develop a training framework for \gls{marl} that incorporates safety constraints during policy learning. We adapt an optimization-free safety tracking control algorithm developed in \cite{romagnoli2023software}, which is based on ellipsoidal positively invariant sets. We show how this safety filter can be adapted to the context of \gls{mas} and how safety is guaranteed. We integrate this type of filter in the agents' training process, allowing them to learn decentralized policies and avoid situations where the filter compromises task objectives.





\begin{figure}[ht]
    \centering
    \includegraphics[scale=0.3]{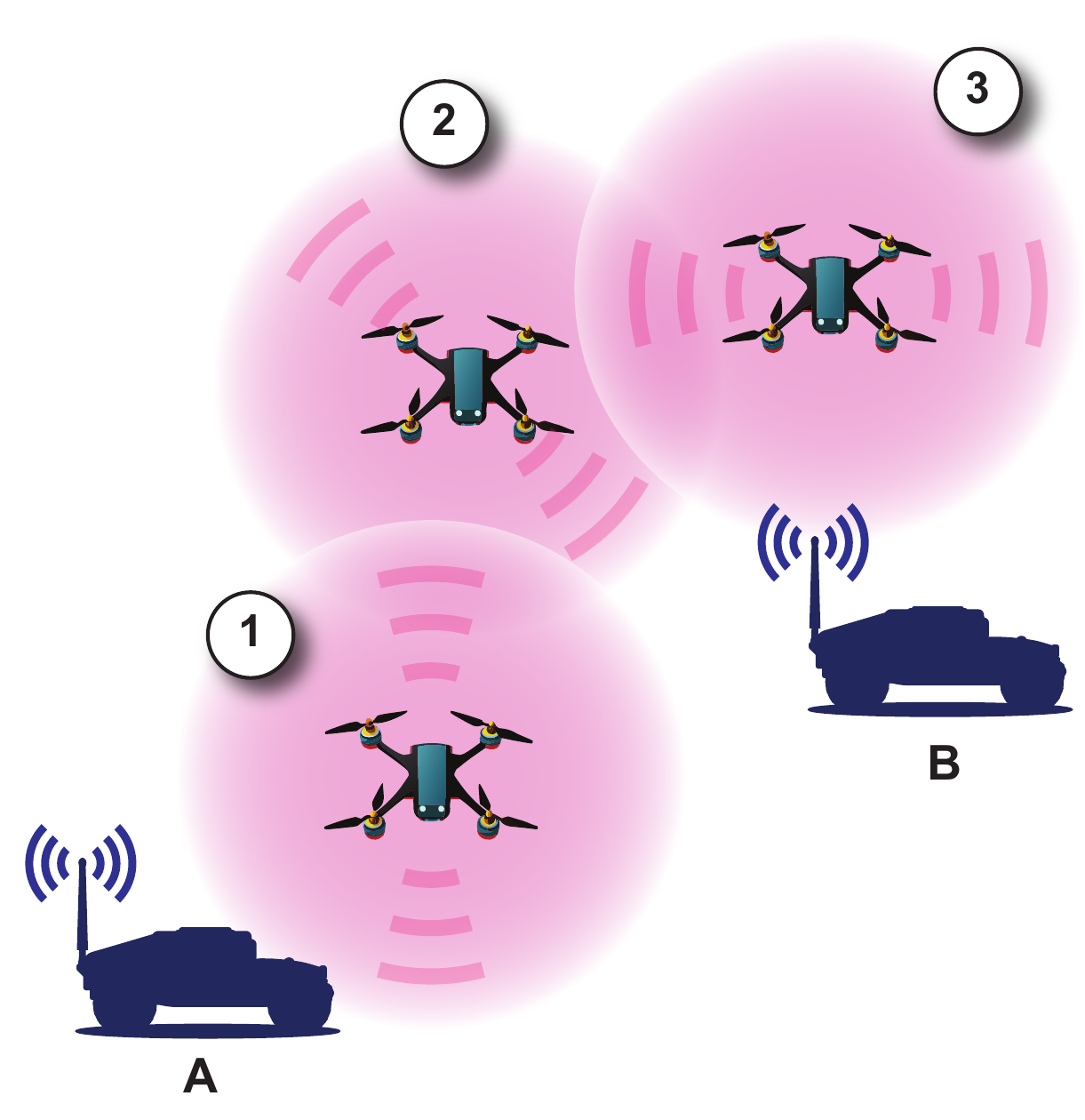}
    \caption{Decentralized swarm coordination for dynamic network bridging}
    \label{fig:task}
\end{figure}



We demonstrate our approach in the dynamic network bridging task~\cite{galliera_2024} (Fig. \ref{fig:task}) where a number of agents establish and maintain a communication path between two moving targets. Our key contributions are as follows: 1) A novel decentralized control framework that integrates a safety filter based on invariant sets into the \gls{marl} agents' training process, and restricts the effect of each agent's movement updates to only its one-hop neighbors, enabling efficient local coordination; 2) We augment the learning process with safety-informed message passing, where agents observe safety filter activations as edge-level features during training, leading to improved coordination 3) An algorithm that updates setpoints while preserving safety conditions through communication with affected neighbors; 3) An analytical computationally tractable condition verifying potential safety violations during updates. Overall our approach provides effective coordination among agents while ensuring safety constraints.



 
\section{Preliminaries}

\subsection{Task description}
Dynamic network bridging is a multi-agent task focused on addressing the problem of dynamically establishing and maintaining a communication link between two mobile, independent nodes (targets) using a swarm of autonomous agents taking decentralized decisions~\cite{galliera_2024}. Agents and targets are moving in a 2D grid space and have a limited communication range, within which they exchange information. This region can be defined as a 2D ball of radius \( r \), centered at its position \( p_i(t) = [p_{x,i}(t), p_{y,i}(t), p_{z,i}(t)]^T \). Based on this definition, two agents $i$ and $j$ can communicate if and only if their regions intersect:
\begin{equation}\label{1hop}
\mathcal{B}_r(p_i(t)) \cap \mathcal{B}_r(p_j(t)) \neq \emptyset.
\end{equation}
where \[
\mathcal{B}_r(p_i(t)) = \left\lbrace p \in \mathbb{R}^3 : \|p - p_i(t) \| ^2 \leq r^2 \right\rbrace
\]
In our setup, we consider $p_{z,i}$ fixed, therefore the one-hop communication region can be described by a 2D ball.
Formally, the system is represented as a dynamic graph $\mathcal{G}(t) = (\mathcal{V}, \mathcal{E}(t))$ with nodes $\mathcal{V}$ corresponding to agents and targets. An edge $(u, v)$ exists between nodes if their Euclidean distance at time-step $t$ is within a predefined communication range.

The task reflects scenarios where communication infrastructure is absent or unreliable, such as in disaster zones or remote environments. Agents must position themselves to form an ad-hoc, decentralized communication network that connects the two mobile targets. Each agent operates based on local observations and limited-range communication, with no access to global state information.

\subsection{Multi-agent reinforcement learning}
In a typical multi-agent cooperative scenario, $N$ agents interact within an environment to complete a task. Noisy and limited sensors may prevent each agent from observing the full state of the environment, allowing access only to partial observations. With partial observability, an agent no longer knows the true state of the environment, but instead needs to maintain a belief state - a probability distribution over states estimated from its observation history - which it uses to select actions. Decision-making processes in such contexts can be formulated as a \gls{dpomdp} \cite{bernstein2002complexity, Oliehoek_Amato_2016}.

A \gls{dpomdp} is defined by a tuple $(S, \mathbf{A}, \mathcal{O}, T, R, \Omega, \gamma, N)$ where $S$ is a finite set of states of the environment, $\mathbf{A}$ is a set of joint action spaces, and $N$ is the number of agents, $T$ is the state transition probability function which gives the probability of transitioning from state $s$ to state $s^{\prime}$ when joint action $\mathbf{a}=(a_1, \cdots, a_N)$ is taken; $R$ is the joint reward function that maps states and joint actions to real numbers and is used to specify the goal of the agents; $\Omega$ is a set of joint observations $O_i$, where $O_i$ is the observation set for agent $i$; $\mathcal{O}$ is the joint observation probability function, which gives the probability of receiving joint observation $\mathbf{o}=(o_1, \cdots, o_N)$ after taking joint action $\mathbf{a}$ and ending up in state $s^{\prime}$; and $\gamma$ is the reward discount factor, $0 \leq \gamma \leq 1$.
The goal of the agents is to maximize a joint reward function. Solving a \gls{dpomdp} optimally is computationally intractable, therefore, various approximate solution methods \cite{oliehoek2013approximate, lowe2017multi} have been used, such as heuristic search, dynamic programming, value or policy-based approximators with neural networks etc.


\subsection{Distributed autonomous swarm formation for dynamic network bridging}
\label{subsec:dnb}

We formulate our problem as a \gls{dpomdp}, where each agent has restricted sensing capabilities given by a specific communication range, leading to incomplete information about the global state of the system outside such area. However, whenever two entities (agents or targets) are within range, a communication channel becomes available for them to exchange information. This channel enables the possibility for agents to know each other features and execute distributed strategies requiring message exchange to coordinate. At the beginning of every time-step, agents make decentralized decisions, generating target points $w$, utilizing their local observation to choose the direction of their next movement. To solve the \gls{dpomdp} that corresponds to our task we used a \gls{gnn} approximator that encodes the graph structure of the problem.

\subsection{Agent Dynamics}
We consider a \gls{lti} of the form  
\begin{equation}\label{eqn:lti_sys}
\dot{x} = A (x - x_{sp})
\end{equation}  
where \( x \in \mathbb{R}^n \) is the state of the system, and the origin is translated to the setpoint \( x_{sp} \in \mathbb{R}^n \). The matrix \( A \in \mathbb{R}^{n \times n} \) is Hurwitz (i.e., \( \mathrm{Re}(\lambda_i) < 0 \)), where \( \lambda_i \in \mathbb{C} \) is the \( i \)-th eigenvalue of \( A \). The matrix \( A \) can be seen as the closed-loop system matrix obtained by applying a state-feedback controller (i.e., \( u = -K(x - x_{sp}) \)), where \( u \) is the control input.  

For the asymptotically stable \gls{lti} system \eqref{eqn:lti_sys}, there exists a quadratic Lyapunov function of the form  
\begin{equation}\label{lyap_fun}
V(x) = (x - x_{sp})^T P (x - x_{sp})
\end{equation}  
where \( P \) is a symmetric positive definite matrix that satisfies the Lyapunov equation  
\begin{equation}\label{eqn:lyapunov}
A^T P + P A = -Q
\end{equation}  
for a given symmetric positive definite matrix \( Q > 0 \).  

Defining the \( P \)-norm as  
\begin{equation}\label{P-norm}
\lVert x - x_{sp} \rVert_P \triangleq \sqrt{(x - x_{sp})^T P (x - x_{sp})},
\end{equation}  
for a fixed value \( c > 0 \), the Lyapunov function \eqref{lyap_fun} defines an ellipsoid, which can be written in terms of \eqref{P-norm} as follows:  
\begin{equation} \label{pinvset}
\mathcal{E}_{c}(x_{sp}) = \{ x \in \mathbb{R}^n \mid \lVert x - x_{sp} \rVert_P^2 \leq c \}.
\end{equation}  

From \eqref{eqn:lyapunov}, \( V(x(t)) \) is a monotonically decreasing function; therefore, for any \( x(t) \) starting within the ellipsoid \eqref{pinvset}, the trajectory never leaves it. Hence,  
\( \mathcal{E}_c \) is called a \textit{positively invariant set} \cite{Khalil:1173048}. For the rest of the paper, we will refer to the state of the agent $i$ as $x_i$ and for the related setpoint as $x_{sp,i}$. In this context, the tracking algorithm updates the setpoint of the agent to reach a desired position. This formulation can be extended to more general nonlinear systems by using linearization \cite{Khalil:1173048}, or Lyapunov-like methods such as quadratic boundedness \cite{brockman1995quadratic}, as used in the context of UAVs in \cite{romagnoli2023software}, where ellipsoidal positively invariant sets are provided. We use this last case in our experiments.

\subsection{Safety}

\begin{definition}\label{def:safety}
Given two agents $i$ and $j$, with distinct equilibrium points $x_{sp,i} \neq x_{sp,j}$ and initial states $x_i(t_0) \in \mathcal{E}_c(x_{sp,i})$ and $x_j(t_0) \in \mathcal{E}_c(x_{sp,j})$, respectively, agents $i$ and $j$ are deemed \textit{safe} relative to each other if:
\begin{equation}\label{safe1}
\mathcal{E}_c(x_{sp,i}) \cap \mathcal{E}_c(x_{sp,j}) = \emptyset.
\end{equation}
\end{definition}

This definition leverages the invariant properties of \eqref{pinvset}, ensuring agents' trajectories remain confined within their respective safety ellipsoids, thereby preventing collisions.

To operationalize this safety criterion within the two-dimensional communication space, we impose the following:
\begin{asmpt}\label{asm:projection}
The projection of $\mathcal{E}_c(x_{sp,i})$ onto the 2-D plane defined by the coordinates $(p_x, p_y)$ must always lie within the 2-D communication ball $\mathcal{B}_r(p_i(t))$, where $p_i(t) = [p_{x,i}(t), p_{y,i}(t)]$ represents the actual position of agent $i$.
\end{asmpt}

\section{Safety Filter for Tracking Control}

The proposed algorithm consists of two steps: first, computing a feasible safe step (Figure~\ref{fig:tracking}); and second, checking whether the new position would interfere with another agent (Figure~\ref{fig:safety}).

Let us start with the \textbf{first step} which has been borrowed from \cite{romagnoli2023software}. For each agent, we select two scalars \(c\) and \(s\) such that $0 < s < c $ therefore $\mathcal{E}_{s}(x_{sp,i})\subset \mathcal{E}_{c}(x_{sp,i})$.

\begin{definition}\label{def:two}
We can say that the agent has reached the setpoint \(x_{sp,i}\) if and only if $x_i(t) \in \mathcal{E}_s(x_{sp,i})$.  
\end{definition}

Assuming that the agent $i$ is moving from the target point $w_i$ to the target point $w_{i+1}$, generated by its decentralized policy, the idea is to generate a sequence of setpoints on the line that connects the two target points. Once the state of agent $i$, $x_i$ reached the setpoint \(x_{sp,i}\), a new setpoint \(x'_{sp,i}\) is generated following the rule
\begin{equation}\label{update}
x'_{sp,i} = x_{sp,i} + (\sqrt{c} - \sqrt{s})v
\end{equation}
where \(v \in \mathbb{R}^n\) with \(\| v \|_2 = 1\), indicating the direction of the line connecting the two target points. As showed in \cite{romagnoli2023software}

\begin{equation}\label{cond:safe1}
x_i(t) \in \mathcal{E}_s(x_{sp,i}) \Rightarrow x_i(t) \in \mathcal{E}_c(x'_{sp,i}).    
\end{equation}

Fig. \ref{fig:tracking} illustrates the process of updating the setpoint, presented in 2-D space for clearer visualization, though it actually corresponds to a state space with
n=12 components, as applicable to quadrotors.

\begin{figure}[h]
    \centering
    \begin{subfigure}{0.25\textwidth} 
        \centering
        \includegraphics[trim=15 0 10 45, clip,width=\textwidth]{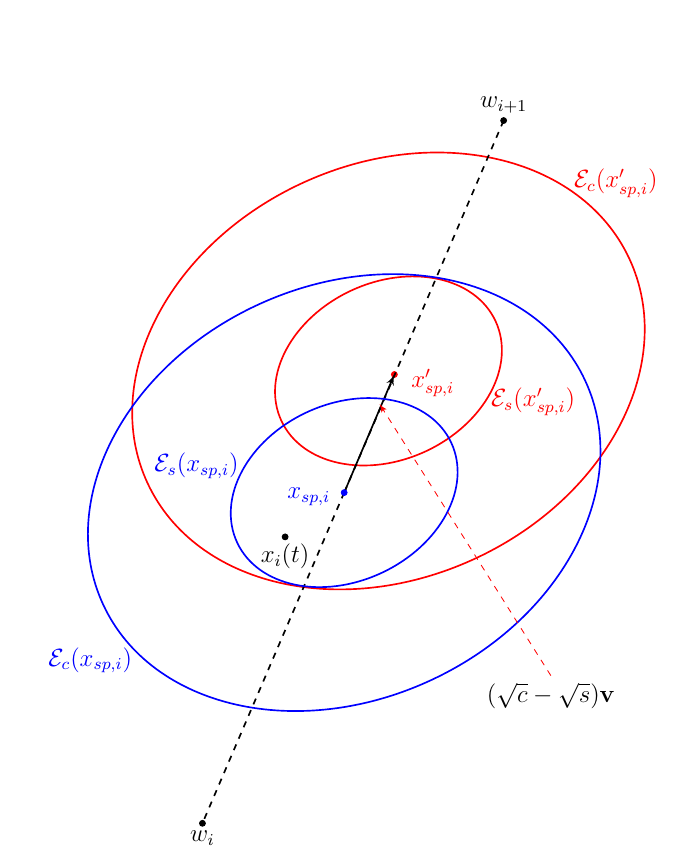}
        \caption{Setpoint update for agent $i$ in the state space.}
        \label{fig:tracking}
    \end{subfigure}
    \hfill
    \begin{subfigure}{0.2\textwidth} 
        \centering
        \includegraphics[width=\textwidth]{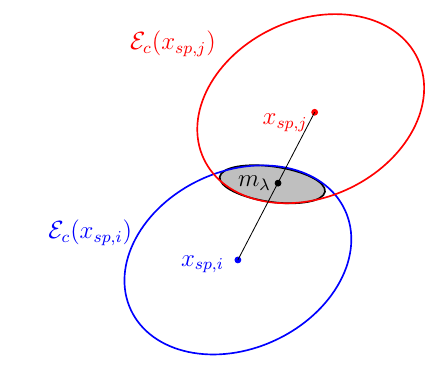}
        \caption{ $m_\lambda$ is on the segment that connects the two setpoints.}
        \label{fig:safety}
    \end{subfigure}
    \caption{Safe tracking control and safety verification.}
    \label{fig:combined}
\end{figure}

Thanks to \eqref{update} and Assumption \ref{asm:projection}, the projection of $\mathcal{E}_c(x'_{sp,i})$ is contained in $\mathcal{B}(p_i(t))$. 
\begin{proposition}\label{prop:one}
    Considering agent $i$ moving from target point $w_i$ to $w_{i+1}$, and updating the setpoint $x_{sp,i}$ every time $x_{i}(t)\in \mathcal{E}_s(x_{sp,i})$ by using \eqref{update}. Considering also Assumption 1 true, then the safety condition \eqref{safe1} can be violated only with all the agents $j\neq i$ that satisfy \eqref{1hop} (i.e. within the one-hop communication range). 
\end{proposition}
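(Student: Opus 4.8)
The plan is to argue by contraposition: rather than showing directly that a violation of \eqref{safe1} forces $j$ into the one-hop set \eqref{1hop}, I would show that if $j \neq i$ fails the one-hop condition, i.e. $\mathcal{B}_r(p_i(t)) \cap \mathcal{B}_r(p_j(t)) = \emptyset$, then the safety condition \eqref{safe1} is automatically satisfied and no violation can occur. This reduces the whole statement to a containment-and-projection argument.

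First I would introduce the projection map $\pi:\mathbb{R}^n \to \mathbb{R}^2$ sending a state to its position coordinates $(p_x,p_y)$, so that ``safety'' in the $n$-dimensional state space can be read off in the $2$-D communication plane. The central geometric fact I would establish is the sandwich $\pi(\mathcal{E}_c(x_{sp,i})) \subseteq \mathcal{B}_r(p_i(t))$ for every agent, including after each setpoint update. For the current setpoint this is exactly Assumption \ref{asm:projection}; for the updated setpoint $x'_{sp,i}$ produced by \eqref{update} it is the containment already noted before the proposition, which follows because \eqref{update} displaces the setpoint by the controlled amount $(\sqrt{c}-\sqrt{s})v$ while \eqref{cond:safe1} keeps the agent's state inside $\mathcal{E}_c(x'_{sp,i})$ and its actual position $p_i(t)$ remains the center of the same communication ball.

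With the sandwich in hand, the remainder is elementary, and I would use only the easy direction of the projection relation: if two sets in $\mathbb{R}^n$ share a point $p$, then $\pi(p)$ lies in both of their projections, so disjoint projections force disjoint sets (the converse being false in general, but never needed here). Assuming $\mathcal{B}_r(p_i(t)) \cap \mathcal{B}_r(p_j(t)) = \emptyset$ and chaining the containments, $\pi(\mathcal{E}_c(x_{sp,i}))$ and $\pi(\mathcal{E}_c(x_{sp,j}))$ are disjoint, hence $\mathcal{E}_c(x_{sp,i}) \cap \mathcal{E}_c(x_{sp,j}) = \emptyset$, which is precisely \eqref{safe1}. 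Taking the contrapositive yields the claim: a violation of \eqref{safe1} can occur only against agents $j \neq i$ lying in the one-hop range \eqref{1hop}.

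The step I expect to be the real obstacle is not the final projection chain (which is immediate) but the justification that the sandwich survives the setpoint updates. The proposition assumes updates occur repeatedly along the segment from $w_i$ to $w_{i+1}$, so I would need the argument to be uniform in the update index, i.e. to maintain $\pi(\mathcal{E}_c(\cdot)) \subseteq \mathcal{B}_r(p_i(t))$ as an invariant after each application of \eqref{update}, rather than only at the initial setpoint. This is where \eqref{update} and \eqref{cond:safe1} must be invoked with care; once the invariant is shown to persist through every update, the disjointness conclusion transfers verbatim to whatever current setpoint each agent holds.
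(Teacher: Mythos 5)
Your proposal is correct and takes essentially the same route as the paper: the paper's proof likewise combines Assumption \ref{asm:projection} with the update rule \eqref{update} and the invariance property \eqref{cond:safe1} to keep the projection of $\mathcal{E}_c(x'_{sp,i})$ inside $\mathcal{B}_r(p_i(t))$, and then concludes that any intersection of safety ellipsoids forces the communication balls to intersect. Your contrapositive phrasing and explicit projection map $\pi$ are just a more rigorous write-up of the paper's direct argument, including the same reliance on the containment persisting across repeated setpoint updates.
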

\begin{proof}
    From Assumption \ref{asm:projection}, the ball $\mathcal{B}_r(p_i(t))$ encompasses the projection onto the 2-D space of $\mathcal{E}_c(x_{sp,i})$ for any current state $x_i(t) \in \mathcal{E}_c(x_{sp,i})$. Utilizing \eqref{update} to adjust the setpoint ensures that $x_i(t)$ remains within $\mathcal{E}_c(x'_{sp,i})$, signifying that its 2-D space projection still falls within $\mathcal{B}_r(p_i(t))$. Thus, should the updated setpoint $x'_{sp,i}$ contravene \eqref{safe1}, such a violation would only occur concerning agents situated within the one-hop communication range.
\end{proof}

Proposition \ref{prop:one} highlights a critical aspect of setpoint updates: setpoint updates only impact safety for agents within communication range. This ensures agent $i$ can exchange messages with affected neighbors before executing updates. This allows coordinating to preemptively address any potential safety concerns from the new setpoint through direct communication with impacted agents.

Building upon Proposition \ref{prop:one}, let's define $\mathcal{D}_i(t)$ as the set comprising all agents $j \neq i$ that are within the one-hop communication range of agent $i$ at time $t$. The safe setpoint update algorithm for each agent $i$ is given in Algorithm \ref{algo:setpoint_update}. Once agent $i$'s state $x_i(t)$ reaches its current setpoint $x_{sp,i}$ (as per Definition \ref{def:two}), it computes a new target setpoint $x'_{sp,i}$. Before updating, agent $i$ checks for any possible safety violations by evaluating the safety condition \eqref{safe1} for all agents in $\mathcal{D}_i(t)$. If no violation is detected, $x_{sp,i}$ is updated to the new $x'_{sp,i}$; otherwise, the previous setpoint is retained.

\begin{algorithm}
\SetAlgoLined
\KwIn{Agent $i$ with setpoint $x_{sp,i}$, $\mathcal{D}_i(t)$}
\KwOut{Updated setpoint $x_{sp,i}$}
\BlankLine
 potential\_safety\_violation $\leftarrow$ 0\;
\If{$x_i(t)\in \mathcal{E}_s(x_{sp,i})$}{ 
    Compute $x'_{sp} \leftarrow$ \eqref{update}\;
    \For{$j\in\mathcal{D}_i(t)$}{
        \If{$\mathcal{E}_c(x_{sp,i})\cap \mathcal{E}_c(x_{sp,j})\neq \emptyset$ }{
        potential\_safety\_violation $\leftarrow 1$\;
        break\;
     }\    
}\
\If{potential\_safety\_violation $==$ $0$}{
     $x_{sp,i}\leftarrow x'_{sp,i}$\;
     }
}
\caption{Setpoint Update Algorithm}
\label{algo:setpoint_update}
\end{algorithm}





By applying Algorithm \ref{algo:setpoint_update} to each agent, we ensure that the safety condition \eqref{safe1} is satisfied throughout the system. This is guaranteed by the fact that an agent can only update its setpoint after verifying that the new target does not violate safety for any agents within its one-hop communication range, as stated in Proposition \ref{prop:one}. The proof of safety relies on the following assumptions:

\begin{asmpt}\label{asm:two}
    The initial setpoints of all agents satisfy the safety condition \eqref{safe1}.
\end{asmpt}

\begin{asmpt}\label{asm:three}
    During the setpoint update for agent $i$, all agents $j \in \mathcal{D}_i(t)$ cannot change their setpoints $x_{s p, j}$.
\end{asmpt} 

\begin{proposition}
    Considering the \gls{mas} described where the agents are \gls{lti} systems  \eqref{eqn:lti_sys} and Assumptions \ref{asm:projection}, \ref{asm:two} and \ref{asm:three}, then applying Algorithm \ref{algo:setpoint_update} to each agent guarantees that the safety condition \eqref{safe1} is satisfied for all agents.
\end{proposition}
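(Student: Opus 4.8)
My plan is to argue by induction over the discrete sequence of setpoint-update events that the system triggers, maintaining as the inductive invariant that the setpoints are pairwise safe in the sense of \eqref{safe1}, i.e. $\mathcal{E}_c(x_{sp,i}) \cap \mathcal{E}_c(x_{sp,j}) = \emptyset$ for every pair $i \neq j$. The base case would be handled directly by Assumption \ref{asm:two}, which asserts exactly this invariant at $t_0$. Because updates occur only at the discrete instants when some agent's state first enters its inner ellipsoid (Definition \ref{def:two}), and no setpoint moves between such instants, it will be enough to show that a single update event preserves the invariant.

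For the inductive step I would fix the agent $i$ performing an update at time $t$ and split the set of pairs into three groups: pairs not containing $i$, pairs $(i,j)$ with $j \notin \mathcal{D}_i(t)$, and pairs $(i,j)$ with $j \in \mathcal{D}_i(t)$. The first group is unchanged by the update, so those pairs stay disjoint by the inductive hypothesis. For the second group I would invoke Proposition \ref{prop:one}: since the update \eqref{update} can only threaten \eqref{safe1} for agents inside the one-hop range, any non-neighbor pair remains safe automatically. The third group is where Algorithm \ref{algo:setpoint_update} does the work, as it tests \eqref{safe1} for the candidate $x'_{sp,i}$ against each $j \in \mathcal{D}_i(t)$ and commits the update only when no intersection is found; hence after the event either the update is accepted and all neighbor pairs are disjoint, or it is rejected and the old (safe) setpoint is retained. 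Either way the invariant is restored, closing the induction.

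To connect this setpoint-level invariant to the collision-avoidance guarantee of Definition \ref{def:safety}, I would then use positive invariance of the ellipsoids: each agent's true state never leaves its own $\mathcal{E}_c(x_{sp,i})$, and at the moment of an accepted update condition \eqref{cond:safe1} guarantees $x_i(t) \in \mathcal{E}_c(x'_{sp,i})$, so containment is not broken across the switch. Disjointness of the setpoint ellipsoids then forces the state trajectories to stay disjoint at all times, which is the collision-avoidance conclusion.

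The hard part will be the concurrency argument, which is precisely what Assumption \ref{asm:three} exists to resolve. The safety test inside Algorithm \ref{algo:setpoint_update} evaluates the candidate against the neighbors' setpoints as they stand when the check is run, and the inductive step tacitly assumes those values are authoritative. If two mutual neighbors updated at once, each could clear the test against the other's pre-update ellipsoid while the two post-update ellipsoids in fact overlap, breaking \eqref{safe1}. Assumption \ref{asm:three} forbids exactly this by freezing every $j \in \mathcal{D}_i(t)$ for the duration of agent $i$'s update, effectively serializing conflicting updates; the delicate point to make rigorous is that this freezing makes the setpoints used in the check coincide with the ones finally committed, so no overlooked overlap can slip through. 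Assumption \ref{asm:projection} plays only a supporting role here, since it is what licenses the reduction to the one-hop neighborhood in Proposition \ref{prop:one}.
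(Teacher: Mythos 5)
Your proposal is correct and follows essentially the same route as the paper's proof: the paper argues informally what you cast as an explicit induction, using Assumption~2 for the initial configuration, Algorithm~1's check for one-hop neighbors, Proposition~1 to dismiss pairs outside communication range, and Assumption~3 to guarantee the neighbors' setpoints are frozen (hence the check is authoritative) during an update. Your added structure (the three-way case split and the explicit link from setpoint-ellipsoid disjointness to state-trajectory disjointness via positive invariance and condition~\eqref{cond:safe1}) is a sharper write-up of the same argument rather than a different one.
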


\begin{proof}
    By Assumption \ref{asm:two}, the initial setpoints of all agents satisfy the safety condition \eqref{safe1}. Whenever an agent $i$ updates its setpoint using Algorithm \ref{algo:setpoint_update}, it checks if the new target setpoint $x_{s p, i}^{\prime}$ violates the safety condition for any agent $j \in \mathcal{D}_i(t)$. If a violation is detected, the setpoint is not updated, ensuring that the current safe solution is maintained. Proposition \ref{prop:one} guarantees that safety is only impacted for agents within the one-hop communication range, which agent $i$ can directly communicate with before executing the update. Furthermore, Assumption \ref{asm:three} prevents agents in $\mathcal{D}_i(t)$ from changing their setpoints during agent $i$ 's update, ensuring a consistent safety evaluation. Therefore, by applying Algorithm \ref{algo:setpoint_update} to each agent, the safety condition \eqref{safe1} is upheld throughout the system's operation.
\end{proof}


In the \textbf{second step}, we need to verify whether the safety condition \eqref{safe1} is going to be violated. Our objective is to demonstrate that if \eqref{safe1} is violated, then there exists a point on the line connecting $x_{sp,i}$ and $x_{sp,j}$, inside the overlap region of two ellipsoids $\mathcal{E}_c(x_{sp,i})$ and $\mathcal{E}_c(x_{sp,j})$ (Fig. \ref{fig:safety}). This overlap region, denoted by $\mathcal{E}_c(x{sp,i}) \cap \mathcal{E}_c(x_{sp,j})$, satisfies the following condition:
\begin{equation}\label{intersect_cond}
\lambda(x-x_{sp,i})^TP(x-x_{sp,i}) + (\lambda - 1)(x-x_{sp,j})^TP(x-x_{sp,j}) \leq c
\end{equation}
for \(0 \leq \lambda \leq 1\) with \(x_{sp,i} \neq x_{sp,j}\).

\begin{proposition}\label{prop:three}
Let us consider the ellipsoids \(\mathcal{E}_c(x_{sp,i})\) and \(\mathcal{E}_c(x_{sp,j})\) defined as in \eqref{pinvset}. The set of points \(x\) satisfying \eqref{intersect_cond} for all \(\lambda \in [0,1]\) is either empty, a single point, or an ellipsoid:
\begin{equation}\label{intersect_ellips}
\hat{\mathcal{E}}_{K_\lambda}(m_\lambda) = \left\lbrace x \in \mathbb{R}^n : (x-m_\lambda)^TP^{-1}(x-m_\lambda) \leq K_\lambda \right\rbrace
\end{equation}
where
\begin{align*}
m_\lambda &= \lambda x_{sp,i} + (1-\lambda)x_{sp,j}\\
K_\lambda &= 1 - \lambda(1-\lambda)(x_{sp,j} - x_{sp,i})^TP(x_{sp,j} - x_{sp,i}).
\end{align*}
\end{proposition}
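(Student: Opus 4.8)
The plan is to treat $\lambda \in [0,1]$ as fixed and to read the left-hand side of \eqref{intersect_cond} as a single quadratic form in $x$. Writing $q_i(x) = (x-x_{sp,i})^T P (x-x_{sp,i})$ and $q_j(x)$ analogously, the condition becomes $f_\lambda(x) \triangleq \lambda\, q_i(x) + (1-\lambda)\, q_j(x) \le c$. Because the two ellipsoids in \eqref{pinvset} share the same shape matrix $P$ and the weights $\lambda$ and $1-\lambda$ sum to one, the purely quadratic part of $f_\lambda$ is exactly $x^T P x$, i.e. the Hessian is the single, $\lambda$-independent, positive definite matrix $P$. Hence $f_\lambda$ is a strictly convex quadratic, and its sublevel set $\{x : f_\lambda(x) \le c\}$ is automatically either empty, a singleton, or a full-dimensional ellipsoid. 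The proposition thus reduces to locating the minimizer of $f_\lambda$ and evaluating its minimum value.

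First I would differentiate: $\nabla f_\lambda(x) = 2P\big(x - \lambda x_{sp,i} - (1-\lambda)x_{sp,j}\big)$, which vanishes (since $P$ is invertible) precisely at $m_\lambda = \lambda x_{sp,i} + (1-\lambda)x_{sp,j}$. This recovers the stated center and, as a by-product, shows $m_\lambda$ lies on the segment joining the two setpoints, which is the geometric fact needed for the second-step check in Fig.~\ref{fig:safety}. Completing the square about $m_\lambda$ then rewrites $f_\lambda(x) = (x-m_\lambda)^T P (x-m_\lambda) + r_\lambda$, where $r_\lambda = f_\lambda(m_\lambda)$ is the residual minimum value.

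The main obstacle is the algebraic simplification of $r_\lambda$. After substituting $m_\lambda = \lambda x_{sp,i}+(1-\lambda)x_{sp,j}$ into $f_\lambda$, I expect the constant and cross terms to collapse, via a \emph{variance-decomposition}-type identity, into the clean form $r_\lambda = \lambda(1-\lambda)\,(x_{sp,j}-x_{sp,i})^T P\,(x_{sp,j}-x_{sp,i})$; this is the one place where care with the bookkeeping is required, though it is a routine rearrangement once $m_\lambda$ is inserted. With $r_\lambda$ identified, \eqref{intersect_cond} is equivalent to $(x-m_\lambda)^T P (x-m_\lambda) \le c - r_\lambda =: K_\lambda$, which is precisely an ellipsoid of the form \eqref{intersect_ellips} centered at $m_\lambda$ with level $K_\lambda$ (under the normalization $c=1$).

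Finally I would close the argument with a trichotomy on the sign of $K_\lambda$, using that $P$ is positive definite. When $K_\lambda > 0$ the sublevel set is the non-degenerate ellipsoid $\hat{\mathcal{E}}_{K_\lambda}(m_\lambda)$; when $K_\lambda = 0$ positive definiteness forces $(x-m_\lambda)^T P (x-m_\lambda)=0$, hence $x = m_\lambda$, so the set is the single point $\{m_\lambda\}$; and when $K_\lambda < 0$ the left side is nonnegative while the right side is negative, so no $x$ qualifies and the set is empty. This exhausts the three stated cases and completes the proof.
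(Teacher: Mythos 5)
Your proof is correct, but it takes a genuinely different route from the paper: the paper's entire proof of Proposition~\ref{prop:three} is the citation ``See \cite{gilitschenski2012robust}'', whereas you give a direct, self-contained argument. Your key observation, that both ellipsoids share the same shape matrix $P$, is exactly what makes the elementary proof work: the $\lambda$-weighted combination has the $\lambda$-independent Hessian $2P$, its unique minimizer is $m_\lambda$, and completing the square yields the residual $r_\lambda = \lambda(1-\lambda)(x_{sp,j}-x_{sp,i})^T P (x_{sp,j}-x_{sp,i})$; your ``variance-decomposition'' step checks out, since with $d = x_{sp,j}-x_{sp,i}$ one has $m_\lambda - x_{sp,i} = (1-\lambda)d$ and $m_\lambda - x_{sp,j} = -\lambda d$, so $r_\lambda = \lambda(1-\lambda)^2\lVert d\rVert_P^2 + (1-\lambda)\lambda^2\lVert d\rVert_P^2 = \lambda(1-\lambda)\lVert d\rVert_P^2$, after which the trichotomy on the sign of $K_\lambda$ follows immediately from $P \succ 0$. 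What the citation buys the paper is generality: the result in \cite{gilitschenski2012robust} covers ellipsoids with two \emph{different} shape matrices, where the center and $K(\lambda)$ involve the inverse of a convex combination of shape matrices and the algebra is substantially heavier. What your argument buys is transparency, and an explicit derivation of $m_\lambda$ and $K_\lambda$ in the only case the paper needs. One caveat: your completion of the square (correctly) produces the sublevel set $\{x : (x-m_\lambda)^T P (x-m_\lambda) \le K_\lambda\}$, with $P$ rather than the $P^{-1}$ written in \eqref{intersect_ellips}; likewise the weight $(\lambda-1)$ in \eqref{intersect_cond} should read $(1-\lambda)$ for the condition to be the convex combination you analyze, and $K_\lambda = 1 - r_\lambda$ presumes the normalization $c=1$. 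These are inconsistencies in the paper's statement (inherited from the different ellipsoid convention of \cite{gilitschenski2012robust} and of \eqref{pinvset}), not gaps in your reasoning, but since you silently corrected them rather than flagging them, you should state explicitly that you are proving the statement under the reading consistent with \eqref{pinvset}.
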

\begin{proof}
    See \cite{gilitschenski2012robust}.
\end{proof}
\vspace{0.2cm}
Proposition \ref{prop:three} shows that if there is an ellipsoid violation, then the point $m_\lambda$ belongs to the intersection Fig. \ref{fig:safety}:
\[
\mathcal{E}_c(x_{sp,i})\cap \mathcal{E}_c(x_{sp,j}) \neq \emptyset \rightarrow m_\lambda \in \mathcal{E}_c(x_{sp,i})\cap \mathcal{E}_c(x_{sp,j})
\]


Therefore, to test if the two ellipsoids intersect we need to find if there exists at least one point on the segment joining $x_{sp,i}$ and $x_{sp,j}$ that satisfies \eqref{intersect_cond}. To do so we consider both ellipsoids
\begin{equation}\label{two_ellips}
\left\lbrace\ \begin{array}{l}
(x-x_{sp,i})^TP(x-x_{sp,i})\leq \rho\\
(x-x_{sp,j})^TP(x-x_{sp,j})\leq \rho\\
\end{array}  \right.
\end{equation}

We replace $x$ with $m_\lambda$ since we are checking only a point in the segment joining the two setpoints.  Defining $d \triangleq x_{sp,j}- x_{sp,i}$,  \eqref{two_ellips} can be rewritten as
\begin{equation}
\left\lbrace\ \begin{array}{r}
\lambda^2 d^TPd\leq \rho\\
(\lambda -1)^2d^TPd\leq \rho\\
\end{array}  \right. \Rightarrow \left\lbrace\ \begin{array}{r}
\lambda^2 \Vert d \Vert_P^2\leq \rho\\
(\lambda -1)^2\Vert d \Vert_P^2\leq \rho\\
\end{array}  \right.
\end{equation}
Solving for $\lambda$ we obtain
\begin{equation}
\left\lbrace\ \begin{array}{l}
\lambda^2 \leq \frac{c}{\Vert d \Vert_P^2}\\
\lambda^2 \Vert d \Vert_P^2-2\lambda \Vert d \Vert_P^2+\Vert d \Vert_P^2\leq c\\
\end{array}  \right. \Rightarrow 1-2\sqrt{\frac{c}{\Vert d \Vert_P^2}}\leq 0
\end{equation}
The two ellipsoids $\mathcal{E}_c(x_{sp,i})$ and $\mathcal{E}_c(x_{sp,j})$ intersect if
\begin{equation}\label{int_cond}
2\sqrt{\frac{c}{\Vert d \Vert_P^2}}\geq 1  \rightarrow \Vert d \Vert_P^2\leq 4c   
\end{equation}

\begin{remark}
    Although Algorithm 1 ensures safety according to Proposition 2, it does not guarantee target achievement. Agents within one-hop communication range may become immobilized at a certain position because any attempt to reach the target points could potentially violate the safety condition \eqref{safe1}. We intentionally refrain from employing any enforcer to resolve deadlock scenarios. Instead, we allow the agents to learn how to avoid or recover from them through the training process.

\end{remark}

In summary, our analysis provides the following key theoretical results:
\begin{itemize}
    \item Each agent's setpoint update only impacts the safety of neighboring agents within its one-hop communication range. This localized effect is an important property that enables decentralized coordination.
    \item we establish that Algorithm \ref{algo:setpoint_update} guarantees the preservation of the safety condition in (\ref{safe1}) for all agents,
    \item We provide \eqref{int_cond} a computationally tractable method for evaluating possible safety condition \eqref{safe1} violations.
\end{itemize}
Together, these theoretical results form the basis for our safe and distributed multi-agent control framework, ensuring that agents can dynamically update their setpoints while maintaining the prescribed safety guarantees through local communication and coordination.
    

\begin{table*}[h]
    \centering
    \begin{tabular}{l|ccc|cc}
    \hline
    & \multicolumn{3}{c|}{\textbf{Training Conditions}} & \multicolumn{2}{c}{\textbf{Evaluation Results}} \\
    \hline
    & Safety Filter & Explicit Penalty & Truncation & Avg Coverage & Avg Safety Interventions \\
    \hline
    Baseline & No  & No & No  & $(37.63 \pm 16.62)\%$ & $26.99 \pm 12.31$ \\
    \textbf{Approach A} & \textbf{Yes} & \textbf{No}  & \textbf{No}  & $(50.14 \pm 12.72)\%$ & $20.45 \pm 7.49$ \\
    Approach B & Yes & Yes & No  &  $(45.02 \pm 11.81)\%$  & $15.25 \pm 7.82$ \\
    Approach C & Yes & Yes & Yes &  $(10.86 \pm 6.44)\%$ & $9.76 \pm 5.72$ \\
    \hline
\end{tabular}

    \caption{Comparison of the trained agents performance across 100 evaluation episodes.}
    \label{tab:agent_performance_safety_truncation}
\end{table*}

\section{Decentralized Safety-Aware Agents}

This section details the components of our approach to ensure safe interactions while learning effective positioning strategies in networked and decentralized multi-agent tasks, by combining \gls{marl} with our safety filter mechanism. The core intuition is to employ Algorithm \ref{algo:setpoint_update} since the training phase and inform the agents of eventual safety filter activations.

\subsection{Guiding learning via decentralized safety filters}
By enforcing adherence to explicit safety conditions defined in \eqref{safe1}, agents experience realistic operational constraints and encounter possible deadlock scenarios whenever direct trajectories toward their target points trigger the safety filter. These constraints inherently shape the policy optimization landscape during training: 1) creating deadlock conditions agents might be blocked by their respective safety filters unless they reconsider their actions; 2) avoiding unsafe states of the environment, which are not going to be visited by the agents. Consequently, the agents need to leverage interactions and communication with neighboring agents to collaboratively avoid or resolve deadlock situations by generating alternative, safety-compliant paths to achieve their objectives effectively.

\subsection{Safety-informed edges}
\label{subsec:safety_informed_edges}

Leveraging the graph nature of the task introduced in Section~\ref{subsec:dnb}, we explicitly introduce edge-level features linked to the conditions determining the activation of the decentralized safety filter. Specifically, edge features comprise base attributes controlling the activation of the safety filter and a binary safety activation indicator between connected agents. Formally, given two connected agents $i$ and $j$, the edge features are defined as:
\begin{equation}
    f_l^{ij} = [\mathbf{f}, S_{i,j}],
\end{equation}
where $\mathbf{f}$ denotes a vector of available attributes related to the activation of the decentralized safety filter, and the binary indicator $S_{i,j}$ explicitly signals whether the safety filter has been activated between the two agents, thereby preventing setpoint update.

\subsection{Policy parameterization}
To demonstrate our approach we train the agents with a simple independent Q-Learning algorithm~\cite{watkins1992q} and employ the same \gls{gnn} architecture proposed in \cite{galliera_2024} for action-value estimation. The architecture integrates \gls{gat}~\cite{veličković2018graph} and \gls{lstm} layers~\cite{HausknechtS15}, enabling effective modeling of spatial and temporal dependencies essential for collaborative decision-making. The \gls{gat} layers allow agents to dynamically attend to and integrate neighbor information, significantly improving inter-agent coordination by adaptively sharing latent representations through message passing. The \gls{lstm} captures temporal dependencies and partial observability, allowing agents to leverage historical context in their decisions. The architecture is trained using a \gls{ctde} framework, where the agents optimize the same action-value function parameterization during training while acting independently based on local observations and the interaction with the other agents.

\section{Experimental Setup}
\label{sec:exp_setup}
In this section, we describe in detail the specific environment implementation and training configurations employed for our \gls{marl} agents.

\subsection{Dynamic network bridging environment implementation}

\textbf{Environment:} The simulated environment is normalized to a $[0,1] \times [0,1]$ grid, though agents can move beyond these limits. Each entity has a communication radius of 0.20, which defines the area within which it can detect and communicate with others.

\textbf{Agent observations:}
At each time step, agents receive an observation consisting of their local one-hop graph structure. This includes \textbf{node-level} features for themselves and all directly connected neighbors, as well as \textbf{edge-level} features for each neighbor pair. An example is shown in Figure~\ref{fig:features}, where Agent A observes Agent B. The observation includes both their node features $f_n^A, f_n^B$ and the corresponding edge features $f_i^{AB}$ from a successful connection. Node-level features include the agent's ID, current coordinates, last action taken, and the positions of the targets A and B. These features provide basic local context for decision-making. Edge-level features reflect spatial and safety-related information, as introduced in Section~\ref{subsec:safety_informed_edges}. For each neighboring agent $j$, agent $i$ observes relative position offsets $\Delta p_x = p_{x,i} - p_{x,j}$, $\Delta p_y = p_{y,i} - p_{y,j}$, and their Euclidean distance $d_{i,j}=\sqrt{\Delta p_x^2+\Delta p_y^2}$. Including safety-informed edge features allows agents to adjust their actions based not only on position but also on the state of the safety filter. This supports more informed and coordinated decision-making during training and execution.




\textbf{Agent actions:} At the beginning of every time-step, agents make decentralized decisions, generating target points $w$, utilizing their local observation to choose the direction of their next movement. This is encoded in a two-dimensional action space, with each dimension having three options: move forward, move backward, or hold along the corresponding $p_x$ or $p_y$ axis. Given the direction chosen by the agents, a new target point $x_{sp}$ is calculated for each agent using a fixed offset equal for both the $p_x$ and $p_y$ axes.

\textbf{Reward:} Agents are rewarded using a reward function that motivates agents to form a connected network bridging the targets. Such reward signal combines three components: 

\textit{Base connectivity reward:}
The base reward, $R_{\text{base}}$, encourages larger connected components:
\begin{equation}
    R_{\text{base}}(s) = \frac{|C_{\text{max}}(s)|}{|\mathcal{V}|},
\end{equation}
where $|C_{\text{max}}(s)|$ represents the size of the largest connected component in state $s$, and $|\mathcal{V}|$ is the total number of entities.

\textit{Centroid distance penalty:
}The centroid distance penalty, $P_{\text{cent}}$, is based on agents' positions relative to the targets, penalizing them based on the Euclidean distance between the centroid of the agents' positions and the central point between the targets.

\textit{Target path bonus:}
A bonus, $B_{\text{path}}=100$, is awarded if a path exists between the two targets.

The overall reward combines these three elements:

\begin{equation}
    \label{eq:final_reward}
    R(s, a) = \begin{cases} 
    
    B_{\text{path}}(s) & \mbox{if } \exists \mathrm{path}(T_1, T_2);
    
    \\ R_{\text{base}}(s) - P_{\text{cent}}(s), & \mbox{otherwise}.
    \end{cases} 
\end{equation}

\textbf{Simulated UAV dynamics:}
For our experiments, we use a six-degrees-of-freedom (6-DOF) \gls{uav} as an agent, where its dynamics, control laws, and safety sets used in our simulations are defined in the Appendix of \cite{romagnoli2023software}. For the \gls{uav} simulations, the nonlinear model of the drone is used.

\textbf{Target mobility:} To increase task complexity, each target moves deterministically toward waypoints located outside the communication range of any agent. When a target reaches a waypoint, it pauses for a fixed number of time-steps (5 in our implementation) before selecting a new one. Waypoint selection ensures that at least two agents (and up to $N$) are required to maintain a communication bridge between the targets.

\begin{figure}[ht]
    \centering
    \includegraphics[scale=0.3]{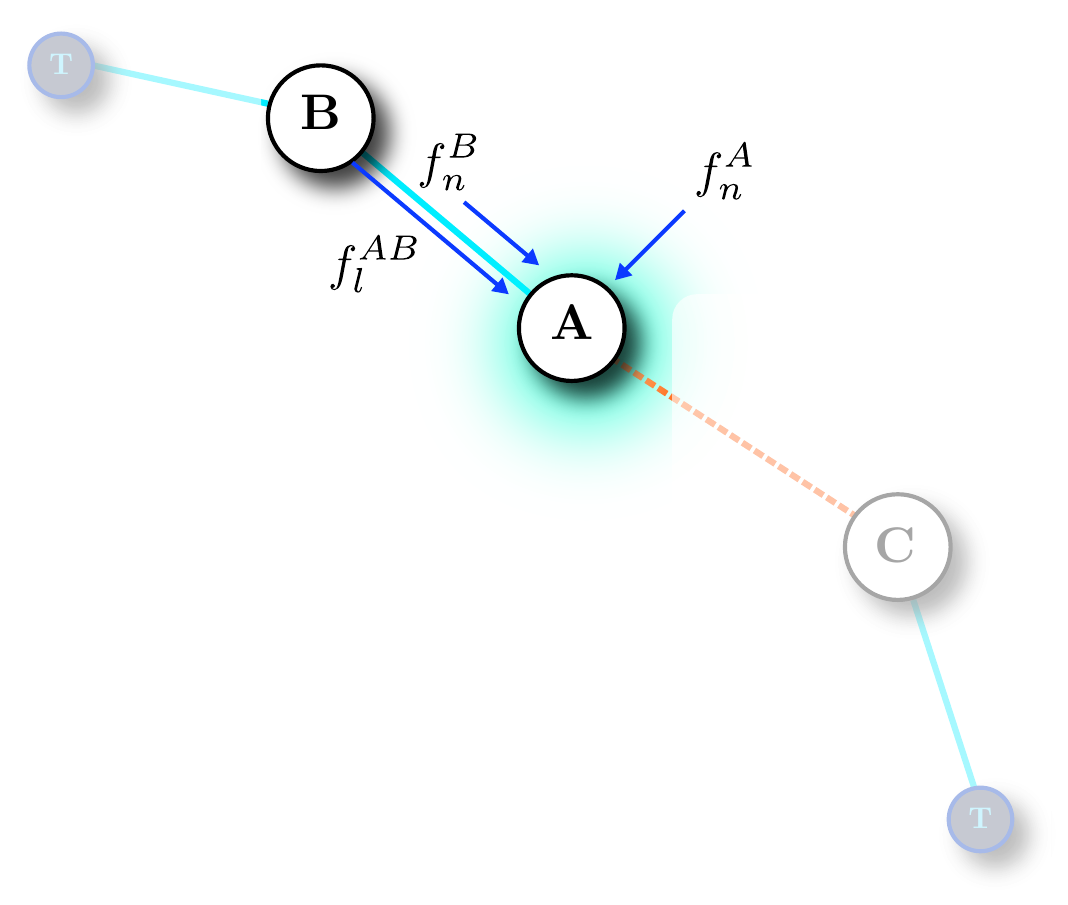}
    \caption{A typical configuration of our system. Agents are represented with letters A, B, C and targets with T. Agent A and B have established a successful connection. Agent A receives node features $f_n^A$, $f_n^A$ and edge features $f_n^{AB}$ as inputs.}
    \label{fig:features}
\end{figure}

\subsection{Training configurations}
We train our agents using episodes of 100 time steps, for a total of one million steps. For generalization, we initialize agent positions, target locations, and target motion patterns using different seeded random states across training and testing scenarios. During evaluation, each trained agent is tested in 100 new episodes, and we measure performance metrics with the safety filter consistently enabled.

We systematically evaluated different training configurations to assess the impact of our safety mechanism:

\begin{itemize}
\item \textbf{Baseline:} A vanilla \gls{marl} approach with no safety filter enabled. Edge features include only the base features $\mathbf{f}$.
\item \textbf{Approach A (Safety Filter):} Our main contribution. Agents use Algorithm \ref{algo:setpoint_update} to prevent safety infringements. The binary safety activation indicator is incorporated into the agents' edge-level features ($\mathbf{f}$ and $S_{i, j}$).
\item \textbf{Approach B (Safety + Penalty):} We add an explicit penalty of $-10$ to Approach A whenever an agent attempts an action that would trigger the safety filter.
\item \textbf{Approach C (Safety + Penalty + Early Truncation):} Building on Approach B, we truncate episodes immediately upon safety-triggering situations, thus enforcing even stricter safety adherence.
\end{itemize}

Every approach adopts the same learning algorithm and \gls{gnn} architecture. In any configuration employing the safety filter, an agent violating \eqref{safe1} has its trajectory toward $x_{sp}$ blocked at its current position.

Furthermore, we train an alternative of Approach A, called \textbf{Approach A-Node}, to investigate the benefits of our proposed integration of safety-informed edges~\ref{subsec:safety_informed_edges}. In this alternative approach, the agents utilize only the node-level features with the addition of a binary indicator $\mathbb{S}$ in case the agent had his safety filter activated during the previous step.

\section{Results}

\begin{figure}[t]
    \centering
    \begin{subfigure}[t]{0.48\textwidth}
        \centering
        \includegraphics[scale=0.3]{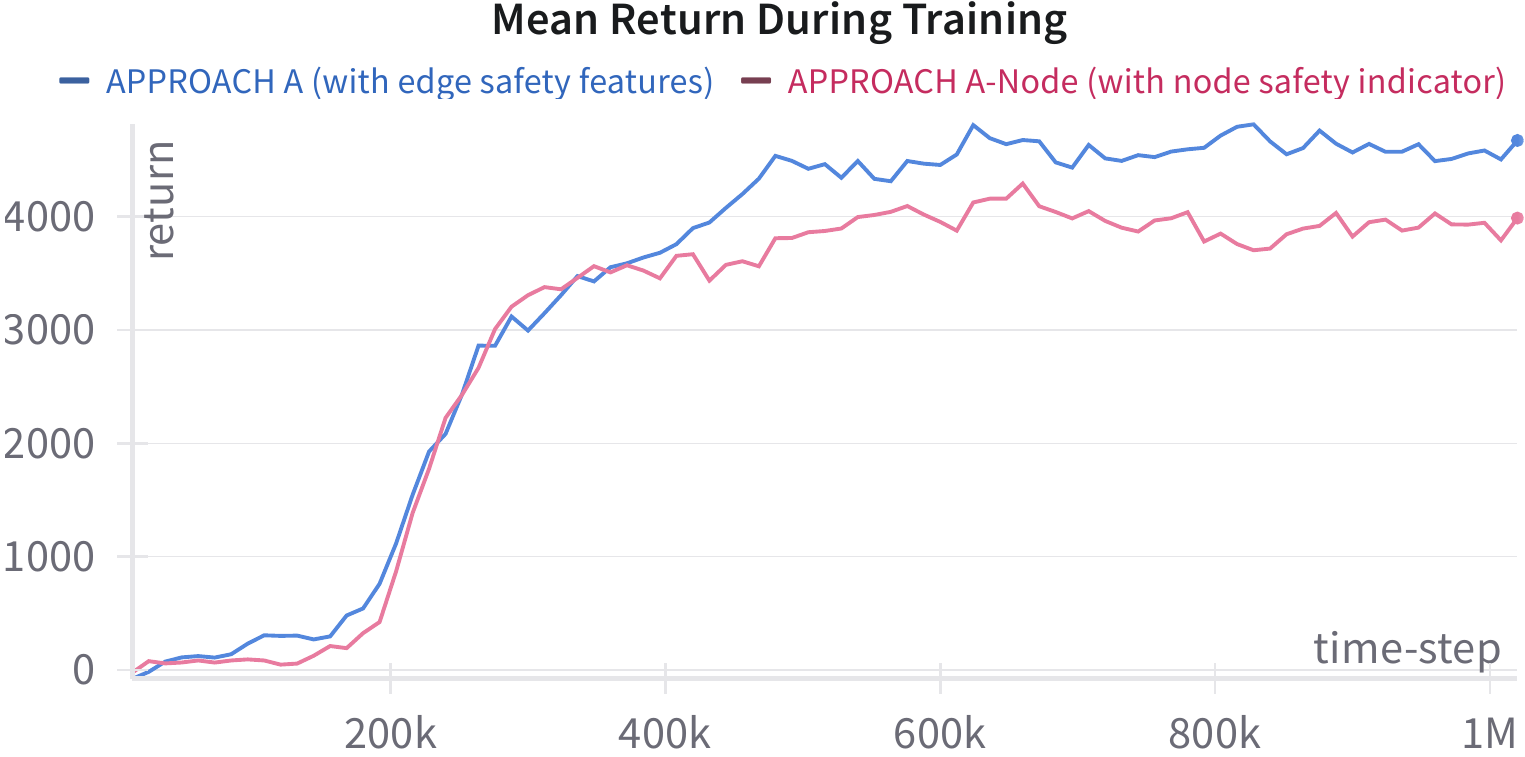}
        \caption{Mean return achieved during training. Approach A (with safety-informed edges) converges to higher rewards compared to Approach A-Node.}
        \label{subfig:training_impact}
    \end{subfigure}\\[2ex]
    \begin{subfigure}[t]{0.48\textwidth}
        \centering
        \includegraphics[scale=0.3]{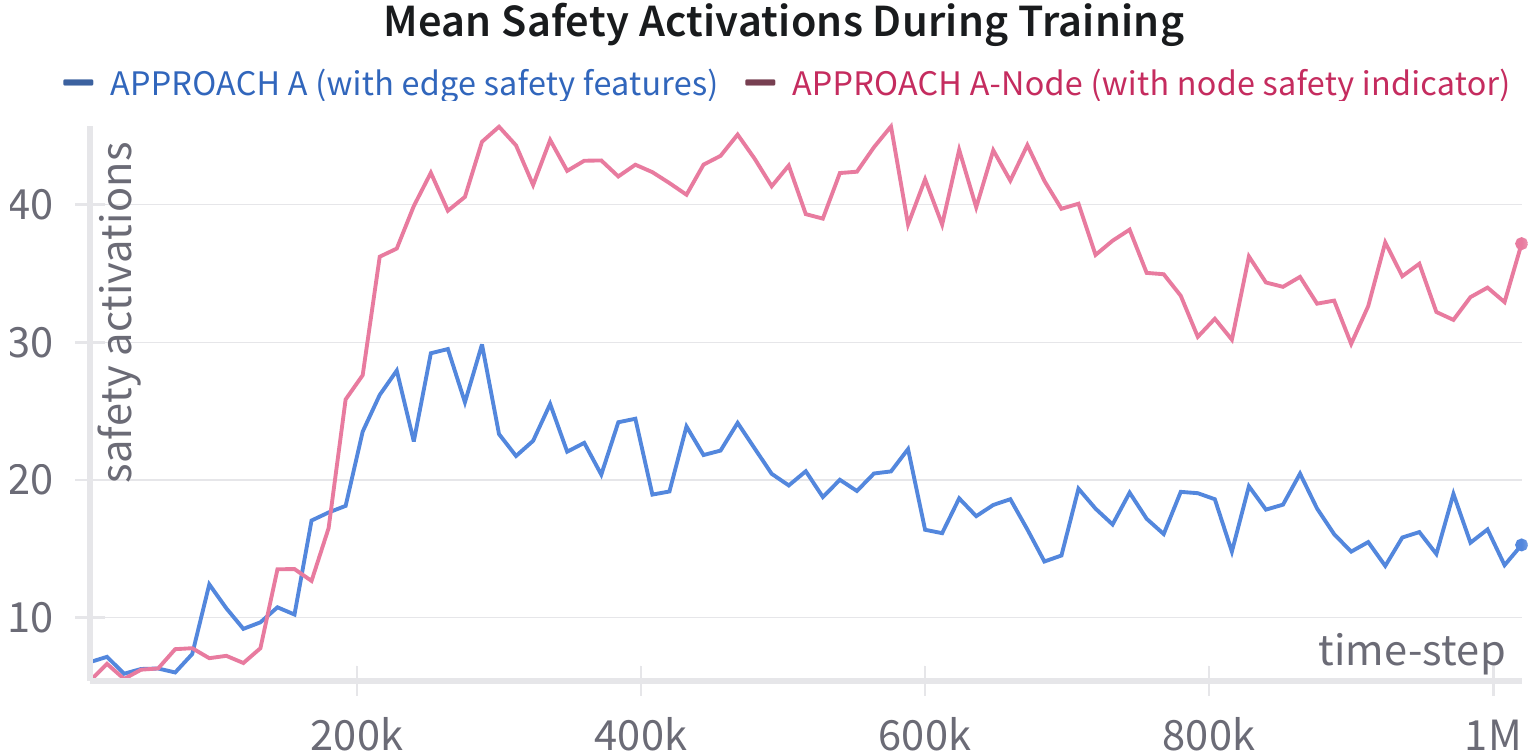}
        \caption{Average number of safety filter activations per episode. Approach A consistently triggers fewer interventions.}
        \label{subfig:safety_activations_impact}
    \end{subfigure}
    \caption{Impact of safety-informed edge features.}
    \label{fig:safety_edges_impact}
\end{figure}

Table \ref{tab:agent_performance_safety_truncation} summarizes our experimental evaluation of various learned \gls{marl} strategies under different training conditions. This comparison was conducted to evaluate the impact of our safety filter. We report the average communication coverage achieved by the agents and the average number of times the safety filter was activated during evaluation.

Furthermore, Figure~\ref{fig:safety_edges_impact} focuses on a comparison between Approach A and Approach A-Node in terms of the mean return, i.e. the sum of the rewards, achieved by the agents during training and the mean number of safety activations triggered by the agents. 

\subsection{Impact of the safety filter}
Baseline agents, trained without any safety mechanisms, achieved an average coverage of $(37.63 \pm 16.62)\%$ but required frequent interventions from the safety filter, averaging $26.99 \pm 12.31$ triggers per episode. In contrast, our Approach A, which incorporates our safety filter during training without explicit penalties, significantly improved average coverage to $(50.14 \pm 12.72)\%$ and reduced safety filter activations to $20.45 \pm 7.49$.

Introducing explicit penalties alongside the safety mechanism (Approach B) maintained competitive coverage at $(45.02 \pm 11.81)\%$ and further reduced the frequency of safety filter interventions to $15.25 \pm 7.82$. However, implementing truncation upon safety violations (Approach C) considerably decreased task performance to $(10.86 \pm 6.44)\%$, though it also minimized the safety triggers to $9.76 \pm 5.72$. 

These results demonstrate that integrating our safety filter during training and including safety-informed edges allows agents to learn better strategies for the dynamic network bridging environment. 

Results on Approach B and Approach C highlight how the introduction of explicit penalties might reduce safety interventions, which could have practical benefits in specific applications. However, this led to a decrease in task performance, which was conspicuous for Approach C, suggesting that agents might exploit the activation of the safety filter to achieve better performance.

\subsection{Impact of the safety-informed edges}
\begin{figure*}[htbp!]
    \centering
    \begin{subfigure}[b]{0.2\textwidth}
        \includegraphics[width=\textwidth]{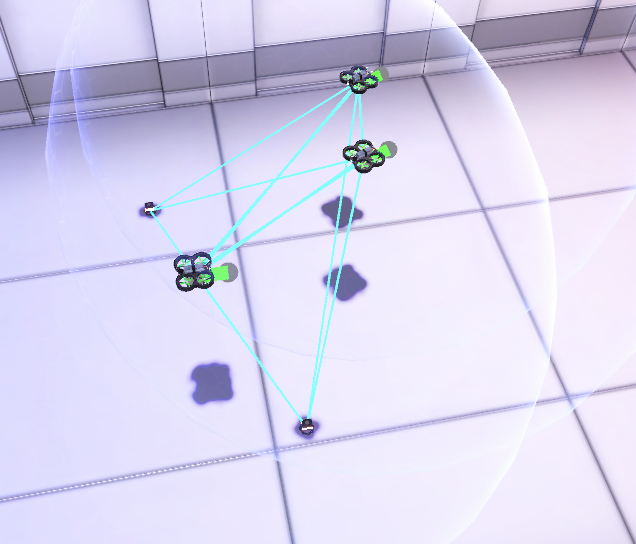}
        \label{fig:figure1}
    \end{subfigure}%
    \hspace{1em}
    \begin{subfigure}[b]{0.2\textwidth}
        \includegraphics[width=\textwidth]{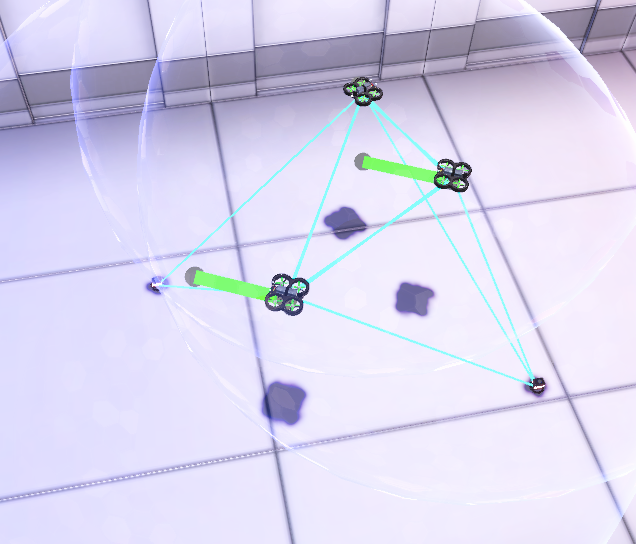}
        \label{fig:figure2}
    \end{subfigure}%
    \hspace{1em}
    \begin{subfigure}[b]{0.2\textwidth}
        \includegraphics[width=\textwidth]{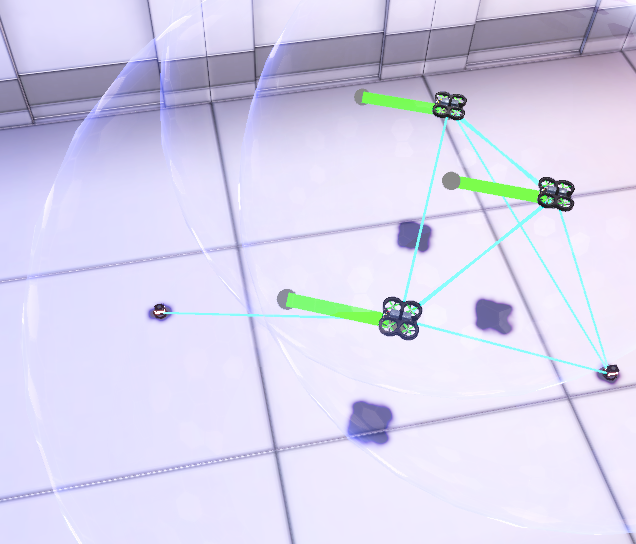}
        \label{fig:figure3}
    \end{subfigure}
    \caption{Sequence of actions executed by agents trained with Approach A when deployed in a LVC environment.}
    \label{fig:lvc}
\end{figure*}

Figure~\ref{subfig:training_impact} shows the mean return (cumulative reward) curves for Approach A and Approach A-Node, over the course of training. Approach A (with safety edge features) converges to higher returns compared to Approach A-Node, indicating that safety-informed edges facilitate more effective coordination strategies. Additionally, Figure~\ref{subfig:safety_activations_impact} presents the average frequency of safety filter activations across training episodes. The results show that Approach A triggers fewer safety interventions than Approach A-Node, particularly in the later stages of training. 

These findings suggest that edge-level safety features, rather than node-level, help achieve better performance while incurring in less safety activations. Figure~\ref{fig:lvc} shows an example of a sequence of actions operated by our decentralized agents trained with Approach A when deployed in a \gls{lvc} environment.

\section{Discussion}
Our findings demonstrate that integrating a decentralized safety filter with \gls{marl} provides a versatile solution for multi-agent networked tasks, where agents can directly communicate with their one-hop neighbors. By embedding the safety mechanism in each agent, we ensure local checks on potential collisions, thereby containing safety concerns within these immediate communication ranges. This design preserves the decentralized nature of our approach while providing global safety assurances. We applied the framework to the dynamic network bridging task, which we extended to include more realistic target mobility and non-linear \gls{uav} dynamics.

To support coordination, we included safety-informed edge features in the agents' observations. These features indicate when the safety filter is active, allowing agents to adjust their actions based on local constraints. This helped agents avoid collisions, maintain communication links, and resolve potential deadlock situations through coordination.


Empirical evaluations confirm that our approach effectively improves task performance while assuring safety. Although adding explicit penalties or early-termination strategies reduced safety activations, these methods often compromised the agent's utility. In contrast, our decentralized filter and safety-informed edges guide safe exploration without requiring extensive reward engineering or outcome truncations. It is important to highlight that the design principles generalize naturally to larger numbers of agents and targets and it can be applied to different environments than the one under study here. For example, we can apply similar safety filters to other networked navigation tasks or expand the existing safety filter to include complex metrics.

Looking ahead, we plan to investigate scalability for swarms of varying sizes, refine message-passing protocols for more sophisticated risk management, and evaluate domains with adversarial behaviors or dynamic constraints. Overall, our results highlight the potential of combining local safety controllers with \gls{marl} in networked multi-agent settings, underscoring that edge-level safety signals are crucial for guiding agents toward globally safe and high-performing outcomes.

\bibliographystyle{ieeetr}
\bibliography{biblio.bib}
\end{document}